%\pdfoutput=1
\documentclass{llncs}
\usepackage{graphicx,fancyhdr,amssymb,ifthen}
\usepackage[all]{xy}

\newcommand{\bR}{\mathbb{R}}

\newcommand{\bE}{\mathbb{E}}

\newcommand{\dom}{\mathrm{dom}\,}

\newcommand{\co}{\mathrm{co}\,}
\newcommand{\cP}{\mathcal{P}}
\newcommand{\Int}[1]{\mathrm{Int}({#1})}

\frontmatter

\begin{document}
%\pagestyle{headings}  % switches on printing of running heads
%\addtocmark{} % additional mark in the TOC
%\chapter*{Preface}

\mainmatter              % start of the contributions

\title{Asymmetric Topologies on Statistical Manifolds\thanks{This work was supported in part by BBSRC grant BB/L009579/1}}
\author{Roman V. Belavkin\inst{1}}
\authorrunning{Roman Belavkin}   % abbreviated author list (for running head)
%
%%%% list of authors for the TOC (use if author list has to be modified)
\tocauthor{Roman Belavkin}
\institute{School of Science and Technology\\
Middlesex University, London NW4 4BT, UK}

\maketitle              % typeset the title of the contribution

\begin{abstract}
Asymmetric information distances are used to define asymmetric norms and quasimetrics on the statistical manifold and its dual space of random variables.  Quasimetric topology, generated by the Kullback-Leibler (KL) divergence, is considered as the main example, and some of its topological properties are investigated.
\end{abstract}

% MSC class:
% 46Bxx  Normed linear spaces and Banach spaces; Banach lattices 
%   46B20   Geometry and structure of normed linear spaces 
%   46B26   Nonseparable Banach spaces 
% 46Exx  Linear function spaces and their duals 
%   46E27   Spaces of measures  
% 46Sxx  Other (nonclassical) types of functional analysis
%   46S99   None of the above, but in this section
%   52A07   Convex sets in topological vector spaces
% 54Exx  Spaces with richer structures 
%   54E55   Bitopologies 
%   54E99   None of the above, but in this section 
% 94Axx  Communication, information 
%   94A17   Measures of information, entropy 

\section{Introduction}
It is difficult to overestimate the importance of the Kullback-Leibler (KL) divergence $D_{KL}[p,q]=\bE_p\{\ln(p/q)\}$ in probability and information theories, statistics and physics \cite{Kullback-Leibler51}.  Not only it plays the role of a non-symmetric squared Euclidean distance on the set $\cP(\Omega)$ of all probability measures on measurable set $(\Omega,\mathcal{A})$, satisfying the non-symmetric Pythagorean theorem \cite{Chentsov68} and the generalized law of cosines (see Theorem~\ref{th:cosines}), but it also possesses a number of other useful and often unique to it properties.  Indeed, it is G\^ateaux differentiable and strictly convex everywhere where it is finite (the convex cone of finite positive measures).  It is unique in the sense of additivity: $D_{KL}[p_1\otimes p_2,q_1\otimes q_2]=D_{KL}[p_1,q_1]+D_{KL}[p_2,q_2]$, and its Hessian defines Riemannian metric on the statistical manifold $\cP\subset Y_+$ invariant in the category of Markov morphisms.  The existence and uniqueness of this Riemannian metric is one of the most celebrated results in information geometry due to Chentsov (Lemma~11.3 in \cite{Chentsov72} or its infinite-dimensional version Theorem~5.1 in \cite{Morozova-Chentsov91}).

Perhaps, the only `inconvenient' property of the KL-divergence is its asymmetry: $D_{KL}[p,q]\neq D_{KL}[q,p]$ for some $p$ and $q$.  It means that a topology defined on $\cP(\Omega)$ in terms of the KL-divergence is not symmetric, and the analysis of asymmetric topological spaces (e.g. quasi-normed, quasi-metric or quasi-uniform spaces) is significantly more difficult than that of normed or metric spaces.  Many classical results about completeness, total boundedness or compactness do not hold in asymmetric topologies (e.g. see \cite{Fletcher-Lindgren82,Cobzas13}).  Perhaps, for this reason previous works have considered statistical manifolds as subsets of Banach spaces, such as the Orlicz spaces \cite{Pistone-Sempi95}.  This, of course, requires certain symmetrization.  Specifically, the Orlicz norm (or the equivalent Luxemburg norm) is defined using the integral of an even function $\phi(x)=\phi(-x)$ (called the $N$-function), which usually uses the absolute value $|x|=\max\{-x,x\}$ under the argument of $\phi$.  Because probability measures are positive functions, the transformation $x\mapsto|x|$ appears to be quite innocent and well-justified, because one can apply the highly developed theory of Banach spaces.  However, this may loose asymmetry that is quite natural in some random phenomena.  Moreover, symmetrization on the statistical manifold $\cP(\Omega)$ also automatically symmetrizes the topology of the dual space containing random variables.  When these random variables are used in the context of optimization (e.g. as utility or cost functions), their symmetrization is rather unnatural, and some random variables cannot be used.  Let us illustrate this in the following examples.

\begin{example}[The St. Petersburg lottery]
\label{ex:spb}
The lottery is played by tossing a coin repeatedly until the first head appears. The probability of head occurring on the $n$th toss assuming independent and identically distributed (i.i.d.) tosses of a fair coin is $q(n)=2^{-n}$.  If the payoff is $x(n)=2^n$, then the lottery has infinite expected payoff (this is historically the first example of unbounded expectation \cite{Bernoulli1738}).  If the coin is biased towards head, however, such as $p(n)=2^{-(1+\alpha)n}$, ($\alpha>0$), then the expected payoff becomes finite.  The effective domain of the moment generating function $\bE_q\{e^{\beta x}\}$ does not contain the ray $\{\beta x:\beta>0\}$, but it does contain the ray $\{-\beta x:\beta\geq0\}$.  Thus, random variable $x(n)=2^n$ belongs to the space, where zero is not in the interior of the effective domain of $\bE_q\{e^{\beta x}\}$ or of the cumulant generating function $\Psi_q(\beta x)=\ln\bE_q\{e^{\beta x}\}$ (the Legendre-Fenchel transform of $D_{KL}[p,q]$).  This implies that sublevel sets $\{p:D_{KL}[p,q]\leq\lambda\}$ in the dual space are unbounded (see Theorem~\ref{th:zero-interior-bounded} of \cite{Asplund-Rockafellar69,Moreau67}).  Note that exponential family distributions $p(x;\beta)=e^{\beta x-\Psi_q(\beta x)}q$ solve the problem of maximization of random variable $x$ on $\{p:D_{KL}[p,q]\leq\lambda\}$, while $p(-x;\beta)=e^{-\beta x-\Psi_q (-\beta x)}q$ solve minimization (i.e. maximization of $-x$).  This example illustrates asymmetry typical of optimization problems, because random variable $x(n)=2^n$ has bottom ($x(1)=2$), but it is topless.
\end{example}

\begin{example}[Error minimization]
\label{ex:error}
Consider the problem of minimization of the error function $z(a,b)$ (or equivalently maximization of utility $x=-z$), which can be defined using some metric $d:\Omega\times\Omega\to[0,\infty)$ on $\Omega$.  For example, using the Hamming metric $d_H(a,b)=\sum_{n=1}^l\delta_{b_n}(a_n)$ on finite space $\{1,\ldots,\alpha\}^l$ or using squared Euclidean metric $d_E^2(a,b)=\sum_{n=1}^l|a_n-b_n|^2$ on a real space $\bR^l$ (i.e. by defining utility $x=-d_H$ or $x=-\frac12d_E^2$).  Let $w$ be the joint distribution of $a$ and $b\in\Omega$, and let $q$, $p$ be its marginal distributions.  The KL-divergence $D_{KL}[w,q\otimes p]=:I_S(a,b)$ defines the amount of mutual information between $a$, $b$.  Joint distributions minimizing the expected error subject to constraint $I_S(a,b)\leq\lambda$ belong to the exponential family $w(x;\beta)=e^{\beta x-\Psi_{q\otimes p}(\beta x)}q\otimes p$.  With the maximum entropy $q\otimes p$ and Hamming metric $x=-d_H$, this $w(x;\beta)$ corresponds to the binomial distribution, and in the case of squared Euclidean metric $x=-\frac12d_E^2$ to the Gaussian distribution.  In the finite case of $\Omega=\{1,\ldots,\alpha\}^l$, the random variable $x=-d_H$ can be reflected $x\mapsto -x=d_H$, as both $\Psi_{q\otimes p}(-\beta d_H)$ and $\Psi_{q\otimes p}(\beta d_H)$ are finite (albeit possibly with different values).  However, in the infinite case of $\Omega=\bR^l$, the unbounded random variable $x=-\frac12d_E^2$ cannot be reflected, as maximization of Euclidean distance has no solution, and $\Psi_{q\otimes p}(\beta\frac12d_E^2)=\infty$ for any $\beta\geq0$.  As in the previous example, $0\notin\Int{\dom\Psi_{q\otimes p}}$.
\end{example}

The examples above illustrate that symmetrization of neighbourhoods on the statistical manifold requires random variables to be considered together with their reflections $x\mapsto -x$.  However, this is not always desirable or even possible in the infinite-dimensional case.  First, random variables used in optimization problems, such as utilities or cost functions, do not form a linear space, but a wedge.  Operations $x\mapsto -x$ and $x\mapsto|x|$ are not monotonic.  Second, the wedge of utilities or cost functions may include unbounded functions (e.g. concave utilities $x:\Omega\to\bR\cup\{-\infty\}$ and convex cost functions $z:\Omega\to\bR\cup\{\infty\}$).  In some cases, one of the functions $x$ or $-x$ cannot be absorbed into the effective domain of the cumulant generating function $\Psi$ (i.e. $\beta x\notin\dom\Psi$ for any $\beta>0$), in which case the symmetrization $x\mapsto|x|$ would leave such random variables out.

In the next section, we shall outline the main ideas for defining dual asymmetric topologies using polar sets and sublinear functions related to them.  In Section~\ref{sec:distance}, we shall introduce a generalization of Bregman divergence, a generalized law of cosines and define associated asymmetric seminorms and quasi-metrics.  In Section~\ref{sec:KL-topology}, we shall prove that asymmetric topology defined by the KL-divergence is complete, Hausdorff and contains a separable Orlicz subspace.

\section{Topologies Induced by Gauge and Support Functions}
\label{sec:support-functions}

Let $X$ and $Y$ be a pair of linear spaces over $\bR$ put in duality via a non-degenerate bilinear form $\langle\cdot,\cdot\rangle:X\times Y\to\bR$:
\[
\langle x,y\rangle=0\,,\ \forall\, x\in X\ \Rightarrow y=0\,,\qquad
\langle x,y\rangle=0\,,\ \forall\, y\in Y\ \Rightarrow x=0
\]
When $x$ is understood as a random variable and $y$ as a probability measure, then the pairing is just the expected value $\langle x,y\rangle=\bE_y\{x\}$.  We shall define topologies on $X$ and $Y$ that are compatible with respect to the pairing $\langle\cdot,\cdot\rangle$, but the bases of these topologies will be formed by systems of neighbourhoods of zero that are generally non-balanced sets (i.e. $y\in M$ does not imply $-y\in M$).  It is important to note that such spaces may fail to be topological vector spaces, because multiplication by scalar can be discontinuous (e.g. see \cite{Borodin01}).  Let us first recall some properties that depend only on the pairing $\langle\cdot,\cdot\rangle$.%, and not on particular topologies chosen.

%  The gauge and support functions of the neighbourhoods will define quasi-norms and quasi-metrics (i.e. asymmetric generalizations of norms and metrics).

Each non-zero $x\in X$ is in one-to-one correspondence with a hyperplane $\partial\Pi x:=\{y:\langle y,x\rangle=1\}$ or a closed halfspace $\Pi x:=\{y:\langle y,x\rangle\leq1\}$.  The intersection of all $\Pi x$ containing $M$ is the \emph{convex closure} of $M$ denoted by $\co[M]$.  Set $M$ is closed and convex iff $M=\co[M]$.  The \emph{polar} of $M\subseteq Y$ is
\[
M^\circ:=\{x\in X:\langle x,y\rangle\leq1\,,\ \forall\,y\in M\}
\]
The polar set is always closed and convex and $0\in M^\circ$.  Also, $M^{\circ\circ}=\co[M\cup\{0\}]$, and $M=M^{\circ\circ}$ if and only if $M$ is closed, convex and $0\in M$.  Without loss of generality we shall assume $0\in M$.  The mapping $M\mapsto M^\circ$ has the properties:
\begin{eqnarray}
%N\subseteq M&\iff&N^\circ\supseteq M^\circ\label{eq:polar-subset}\\
(M\cup N)^\circ&=&M^\circ\cap N^\circ \label{eq:polar-or}\\
(M\cap N)^\circ&=&\co[M^\circ\cup N^\circ] \label{eq:polar-and}
%(M^\circ\cap N^\circ)^\circ&=&\co [M\cup N] \label{eq:bipolar}
\end{eqnarray}
We remind that set $M\subseteq Y$ is called:
\begin{list}{}{}
\item [\emph{Absorbing}] if $y/\alpha\in M$ for all $y\in Y$ and $\alpha\geq\varepsilon(y)$ for some $\varepsilon(y)>0$.
\item [\emph{Bounded}] if $M\subseteq\alpha\Pi x$ for any closed halfspace $\Pi x$ and some $\alpha>0$.
\item [\emph{Balanced}] if $M=-M$.% (or $y\in M$ implies $-y\in M$).
\end{list}
Set $M$ is absorbing if and only if its polar $M^\circ$ is bounded; If $M$ is balanced, then so is $M^\circ$.  If $M$ is closed and convex, then the following are balanced closed and convex sets: $-M\cap M$, $\co[-M\cup M]$.

Given set $M\subseteq Y$, $0\in M$, the set $Y_M:=\{y:y/\alpha\in M,\ \forall\,\alpha\geq\varepsilon(y)>0\}$ of elements absorbed into $M$ can be equipped with a topology, uniquely defined by the base of closed neighbourhoods of zero $\mathfrak{M}:=\{\alpha M:\alpha>0\}$.  The set $X_M:=\{x:\langle x,y\rangle\leq\alpha,\ \forall\,y\in M$\} of hyperplanes bounding $M$ are absorbed into the polar set $M^\circ$, and the collection $\mathfrak{M}^\circ:=\{\alpha^{-1}M^\circ:\alpha^{-1}>0\}$ is the base of the polar topology on $X_M$.  Note that $Y_M$ (resp. $X_M$) is a strict subset of $Y$ (resp. $X$), unless $M$ is absorbing (resp. bounded).  Moreover, it may fail to be a topological vector space, unless $M$ (or $M^\circ$) is balanced.  Such polar topologies can be defined using gauge or support functions.

The \emph{gauge} (or \emph{Minkowski} functional) of set $N\subseteq X$ is the mapping $\mu N:X\rightarrow\bR\cup\{\infty\}$ defined as
\[
\mu N(x):=\inf\{\alpha>0:x/\alpha\in N\}
\]
with $\mu N(0):=0$ and $\mu N(x):=\infty$ if $x/\alpha\notin N$ for all $\alpha>0$.  Note that $\mu N(x)=0$ if $x/\alpha\in N$ for all $\alpha>0$.  The following statements are implied by the definition.

\begin{lemma}
$\mu N(x)<\infty$ for all $x\in X$ if and only if $N$ is absorbing; $\mu N(x)>0$ for all $x\neq0$ if and only if $N$ is bounded.
\label{lm:guage}
\end{lemma}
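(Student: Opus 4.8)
The plan is to treat the two biconditionals separately, relying on the standing convention that $N$ is convex with $0\in N$ (so that the gauge is well behaved and the set of admissible radii along any ray is upward-closed). For the first equivalence I would argue each implication directly from the definition of $\mu N$. If $N$ is absorbing, then for each $x$ there is $\varepsilon(x)>0$ with $x/\varepsilon(x)\in N$, so $\{\alpha>0:x/\alpha\in N\}$ is nonempty and $\mu N(x)\leq\varepsilon(x)<\infty$; this direction needs nothing beyond the definition. Conversely, if $\mu N(x)<\infty$ then $\{\alpha>0:x/\alpha\in N\}$ is nonempty, so $x/\alpha_0\in N$ for some $\alpha_0$, and writing $x/\alpha=(\alpha_0/\alpha)(x/\alpha_0)+(1-\alpha_0/\alpha)\,0$ for $\alpha\geq\alpha_0$ and invoking convexity together with $0\in N$ shows $x/\alpha\in N$ for all $\alpha\geq\alpha_0$, i.e. $N$ is absorbing with $\varepsilon(x)=\alpha_0$. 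Here convexity (or mere star-shapedness about $0$) is exactly what upgrades ``the ray meets $N$'' to ``the ray eventually stays in $N$.''

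For the second equivalence the easy direction is ``$N$ bounded $\Rightarrow\mu N(x)>0$,'' and it uses non-degeneracy of the pairing rather than convexity. Given $x\neq0$, non-degeneracy supplies $y\in Y$ with $\langle x,y\rangle=1$, and boundedness gives $\alpha>0$ with $N\subseteq\alpha\Pi y=\{x':\langle x',y\rangle\leq\alpha\}$. If $\mu N(x)=0$ there would be $\beta_n\downarrow0$ with $x/\beta_n\in N$, whence $1/\beta_n=\langle x/\beta_n,y\rangle\leq\alpha$, contradicting $\beta_n\to0$; hence $\mu N(x)>0$. In words, a halfspace transverse to $x$ caps how far $N$ reaches along the ray through $x$, forcing the gauge to stay bounded away from $0$.

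The remaining implication, ``$\mu N(x)>0$ for all $x\neq0\Rightarrow N$ bounded,'' is where I expect the real work, and I would argue the contrapositive. If $N$ is unbounded, some $\Pi y$ admits no dilate containing $N$, i.e. $\sup_{x\in N}\langle x,y\rangle=\infty$, giving $x_n\in N$ with $\langle x_n,y\rangle\to\infty$. The goal is to convert this escaping sequence into a single direction $x^\ast\neq0$ with $\mu N(x^\ast)=0$, equivalently a nonzero recession ray $\bR_{\geq0}x^\ast\subseteq N$ (using the convex characterisation $\mu N(x)=0\iff\bR_{\geq0}x\subseteq N$), which would contradict $\mu N(x^\ast)>0$. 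Extracting such a limiting direction from the $x_n$ is the main obstacle: in finite dimensions one normalises $x_n/\|x_n\|$, passes to a convergent subsequence, and uses closedness of $N$ to place the limit in the recession cone, but in the present dual-pair setting there is neither a norm nor any a priori compactness. I would therefore either restrict to closed $N$ under a weak-$\ast$ compactness hypothesis on the bounded sets in play, or route the argument through polarity---using $N$ bounded $\iff N^\circ$ absorbing together with the identity $\mu N=s_{N^\circ}$ for closed convex $N\ni0$---so that the claim reduces to ``$s_{N^\circ}(x)>0$ for all $x\neq0$,'' equivalently that the origin lies in the algebraic interior of $N^\circ$. Pinning down precisely which regularity assumption the abstract duality requires is the delicate point I would settle first.
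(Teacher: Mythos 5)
The paper offers no argument for this lemma at all---it simply declares the statements ``implied by the definition''---so the comparison here is really between your proposal and the truth of the statement. Three of your four implications are correct and complete: absorbing $\Rightarrow$ finite gauge, bounded $\Rightarrow$ positive gauge (your halfspace argument via non-degeneracy is exactly right), and your observation that ``finite gauge $\Rightarrow$ absorbing'' genuinely needs star-shapedness about $0$ is a real refinement of the paper's claim: with the paper's definition of absorbing (which demands $x/\alpha\in N$ for \emph{all} $\alpha\geq\varepsilon(x)$), the Euclidean unit sphere in $\bR^2$ together with the origin has finite gauge everywhere yet is not absorbing, so convexity (or star-shapedness) cannot be dispensed with there.

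The gap is the fourth implication, ``$\mu N(x)>0$ for all $x\neq0$ $\Rightarrow$ $N$ bounded,'' which you explicitly leave unresolved; as it stands the proposal is therefore not a proof. But your suspicion that this step hides a real obstruction is justified in the strongest possible sense: no argument can close it at the paper's level of generality, because the implication is false in infinite dimensions even for closed, convex, balanced $N$ containing $0$. Take $X=Y=\ell^2$ with the usual pairing and $N=\{x\in\ell^2:|x_n|\leq n\ \forall n\}$. This $N$ is weakly closed, convex and balanced, and $\mu N(x)=\sup_n|x_n|/n$, which is strictly positive for every $x\neq0$; yet for $y=(1/n)_n\in\ell^2$ one has $\sup_{x\in N}\langle x,y\rangle=\infty$ (truncate $x_n=n$ for $n\leq m$ and let $m\to\infty$), so $N$ is unbounded in the paper's sense. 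Your fallback through polarity cannot rescue this, since it merely restates the claim as ``$sN^\circ(x)>0$ for all $x\neq0$ implies $N^\circ$ is absorbing,'' and the same example refutes that: $N^\circ=\{y:\sum_n n|y_n|\leq1\}$ fails to absorb $(1/n^2)_n$. Closedness is also essential even in the plane: the convex set $\{(a,b):a\geq0,\ 0<b\leq1\}\cup\{(0,b):0\leq b\leq1\}$ contains $0$, has positive gauge in every direction, but is unbounded. The implication is a theorem only for closed convex sets in finite dimensions (an unbounded closed convex $N\ni0$ has a nonzero recession direction $u$, whence $\mu N(u)=0$), or under an additional compactness hypothesis of exactly the kind you anticipated. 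So the honest conclusion is: your proof is incomplete, but the missing piece is missing because the paper's lemma, read literally for arbitrary $N$ in a dual pair, is false in that direction, and the paper's ``implied by the definition'' glosses over precisely the point you flagged.
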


The gauge is positively homogeneous function of the first degree, $\mu N(\beta x)=\beta\mu N(x)$, $\beta>0$, and if $N$ is convex, then it is also subadditive, $\mu N(x_1+x_2)\leq \mu N(x_1)+\mu N(x_2)$.  Thus, the gauge of an absorbing closed convex set satisfies all axioms of a seminorm apart from symmetry, and therefore it is a \emph{quasi-seminorm}.  Function $\rho_N(x_1,x_2)=\mu N(x_2-x_1)$ is a \emph{quasi-pseudometric} on $X$.  If $N$ is bounded, then $\mu N$ is a \emph{quasi-norm} and $d_N$ is a \emph{quasi-metric}.  Symmetry $\mu N(x)=\mu N(-x)$ and $\rho_N(x_1,x_2)= \rho_N(x_2,x_1)$ requires $N$ to be balanced.

The \emph{support} function of set $M\subseteq Y$ is the mapping $sM:X\to\bR\cup\{\infty\}$:
\[
sM(x):=\sup\{\langle x,y\rangle:y\in M\}
\]
Like the gauge, the support function is also positively homogeneous of the first degree, and it is always subadditive.  Generally, $\mu N(x)\geq sN^\circ(x)$, with equality if and only if $N$ is convex.  In fact, the following equality holds:
\begin{lemma}
$sM(x)=\mu M^\circ(x)$, $\forall\,M\subseteq Y,\ 0\in M$.
\label{lm:polar}
\end{lemma}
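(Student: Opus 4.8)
The plan is to unfold both sides from their definitions and reduce the claim to a single equivalence relating the sets over which the infimum and the supremum are taken. Fix $x\in X$ and let $\alpha>0$. By the definition of the polar, $x/\alpha\in M^\circ$ holds exactly when $\langle x/\alpha,y\rangle\leq1$ for every $y\in M$, which by the bilinearity of the pairing together with $\alpha>0$ is equivalent to $\langle x,y\rangle\leq\alpha$ for every $y\in M$, that is, to $sM(x)\leq\alpha$. Hence the set defining the gauge of $M^\circ$ satisfies
\[
\{\alpha>0:x/\alpha\in M^\circ\}=\{\alpha>0:sM(x)\leq\alpha\}\,,
\]
and it remains only to compute its infimum.

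Before doing so I would isolate the one place where the hypothesis $0\in M$ enters: since $\langle x,0\rangle=0$ always lies in the set over which $sM(x)$ is a supremum, we have $sM(x)\geq0$ for every $x$. This nonnegativity is precisely what makes the infimum of the displayed set coincide with $sM(x)$ rather than degenerating. I would then split into the three cases dictated by the value of $sM(x)$. If $sM(x)=\infty$, the displayed set is empty and $\mu M^\circ(x)=\infty=sM(x)$ by the convention adopted in the definition of the gauge. If $0<sM(x)<\infty$, the set equals $[sM(x),\infty)$, whose infimum is attained and equals $sM(x)$. If $sM(x)=0$, the set is $(0,\infty)$, whose infimum is $0$, again agreeing with $sM(x)$ and consistent with the convention $\mu M^\circ(0):=0$.

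I do not expect a substantial obstacle, as the argument is essentially bookkeeping once the defining equivalence is established. The two points requiring care are the boundary case $sM(x)=0$, where the infimum of the open half-line $(0,\infty)$ is not attained yet still equals $0$, and the standing assumption $0\in M$: without it $sM(x)$ could be negative, no admissible $\alpha>0$ would exist, and the identification of the gauge with the support function would break down.
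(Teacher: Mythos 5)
Your proof is correct and takes essentially the same route as the paper's: both unfold the definitions to establish the equivalence $x/\alpha\in M^\circ \iff \langle x,y\rangle\leq\alpha\ \forall y\in M \iff sM(x)\leq\alpha$, and then identify the infimum of the admissible $\alpha$ with $sM(x)$. The only difference is that you spell out the edge cases ($sM(x)=0$ and $sM(x)=\infty$) and the role of the hypothesis $0\in M$, which the paper's terser one-line argument leaves implicit.
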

\begin{proof}
$\langle x,y\rangle\leq sM(x)$ for all $y\in M$, $sM(x/\alpha)=\alpha^{-1}sM(x)$, $sM(x):=\inf\{\alpha>0:\langle x/\alpha,y\rangle\leq1,\ \forall\,y\in M\}=\inf\{\alpha>0:x/\alpha\in M^\circ\}$.\qed
\end{proof}

The following is the asymmetric version of the H\"older inequality:
\begin{lemma}[Asymmetric H\"older]
$\langle x,y\rangle\leq sM(x)sM^\circ(y)$, $\forall\,M\subseteq Y,\ 0\in M$.
\label{lm:holder}
\end{lemma}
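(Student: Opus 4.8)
The plan is to imitate the classical derivation of H\"older's inequality from the duality between a norm and its dual norm: write $\langle x,y\rangle = sM(x)\,\langle x/sM(x),y\rangle$ and recognise the normalised vector $x/sM(x)$ as an element of the polar set $M^\circ$, so that $\langle x/sM(x),y\rangle$ is controlled by the support function $sM^\circ(y)$ (here $M^\circ\subseteq X$, so $sM^\circ$ is a function on $Y$). Since both $0\in M$ and $0\in M^\circ$, the support functions are nonnegative, $sM(x)\geq\langle x,0\rangle=0$ and $sM^\circ(y)\geq0$, so the only real content is to bound the product from below by $\langle x,y\rangle$.

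The key step is the main case $0<sM(x)<\infty$. Set $\alpha:=sM(x)$. By definition of the support function, $\langle x,y'\rangle\leq\alpha$ for every $y'\in M$, hence $\langle x/\alpha,y'\rangle\leq1$ for all $y'\in M$; this is exactly the statement that $x/\alpha\in M^\circ$. Applying the definition of the support function of $M^\circ$ at the point $x/\alpha$ gives
\[
sM^\circ(y)=\sup_{x'\in M^\circ}\langle x',y\rangle\geq\Big\langle \frac{x}{\alpha},y\Big\rangle=\frac{\langle x,y\rangle}{\alpha}.
\]
Multiplying through by $\alpha=sM(x)>0$ yields $\langle x,y\rangle\leq sM(x)\,sM^\circ(y)$, as required. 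One could instead invoke Lemma~\ref{lm:polar} to rewrite $sM^\circ(y)=\mu M^{\circ\circ}(y)$, but the direct normalisation above is cleaner and avoids passing through $M^{\circ\circ}$.

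The main obstacle is not this one-line computation but the degenerate cases, where the product $sM(x)\,sM^\circ(y)$ can be an indeterminate form $0\cdot\infty$. I would dispose of them as follows. If $sM(x)=0$, then $\langle x,y'\rangle\leq0$ for all $y'\in M$, so $x/\alpha\in M^\circ$ for \emph{every} $\alpha>0$; the resulting bound $sM^\circ(y)\geq\langle x,y\rangle/\alpha$ then forces $sM^\circ(y)=\infty$ whenever $\langle x,y\rangle>0$, so that finiteness of $sM^\circ(y)$ already delivers $\langle x,y\rangle\leq0=sM(x)\,sM^\circ(y)$. The symmetric argument handles $sM^\circ(y)=0$. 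In every remaining case one factor is $+\infty$ while the other is strictly positive and the inequality is trivial; adopting the convention $0\cdot\infty=+\infty$ (the natural choice when the product serves as an upper bound) covers the leftover situations. Thus the genuinely analytic content is confined to the normalisation argument, and the care lies entirely in the bookkeeping of the extended-real arithmetic.
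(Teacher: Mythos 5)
Your proof is correct and follows essentially the same route as the paper: normalise $x$ by $\alpha=sM(x)$, observe that $\langle x/\alpha,y'\rangle\leq1$ for all $y'\in M$ places $x/\alpha$ in $M^\circ$, and then bound $\langle x/\alpha,y\rangle$ by $sM^\circ(y)$. The only difference is that you also treat the degenerate cases $sM(x)\in\{0,\infty\}$, which the paper's one-line proof silently skips; that extra bookkeeping is sound and arguably makes the argument more complete.
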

\begin{proof}
$\langle x,y\rangle\leq sM(x)$, $\langle x/sM(x),y\rangle\leq 1$ for all $y\in M$, so that $x/sM(x)\in M^\circ$.  Hence $\langle x/sM(x),y\rangle\leq sM^\circ(y)$.\qed
\end{proof}

The support function $sM(x)$ can be symmetrized in two ways:
\[
s^sM(x):=s[-M\cup M](x)\,,\qquad
s^\circ M(x):=s[-M\cap M](x)
\]
%These even sublinear functions have the following properties.

\begin{lemma}
\begin{enumerate}
\item $s^sM(x)\geq sM(x)\geq s^\circ M(x)$.
\item $s^sM(x)=\sup\{sM(-x),sM(x)\}$.
\item $s^\circ M(x)=\co[\inf\{sM(-x),sM(x)\}]=\inf\{sM(z)+sM(z-x):z\in X\}$.
\item $s^\circ M(x)=\sup\{\langle x,y\rangle:s^sM^\circ(y)\leq1\}$.
\end{enumerate}
\label{lm:even}
\end{lemma}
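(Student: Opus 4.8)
The plan is to treat the four claims as a single progression from elementary monotonicity to duality identities, using throughout that a support function depends on its generating set only through its closed convex hull containing $0$; hence I may assume without loss of generality that $M=M^{\circ\circ}$ (closed, convex, $0\in M$), which changes none of the support functions involved. Claim~(1) is then pure monotonicity of $M\mapsto sM$ under inclusion: from $-M\cap M\subseteq M\subseteq -M\cup M$, taking the supremum of $\langle x,\cdot\rangle$ over each set yields $s^\circ M(x)\leq sM(x)\leq s^sM(x)$. For claim~(2) I would split the supremum defining $s^sM(x)=\sup\{\langle x,y\rangle:y\in -M\cup M\}$ over the two halves of the union: the half over $M$ is $sM(x)$ by definition, while the half over $-M$ equals $\sup_{y\in M}\langle x,-y\rangle=\sup_{y\in M}\langle -x,y\rangle=sM(-x)$ by bilinearity, and the supremum over a union is the larger of the two.

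Claim~(3) is where the real work lies, and I expect the handling of closure to be the main obstacle. The second equality asserts that the support function of the intersection $-M\cap M$ is the infimal convolution of the support functions of $-M$ and $M$. I would obtain it by conjugate duality: since $sM$ is the conjugate of the indicator $\delta_M$ and $\delta_{-M\cap M}=\delta_{-M}+\delta_M$, the rule $(f+g)^*=\mathrm{cl}(f^*\,\square\,g^*)$ gives $s^\circ M=\mathrm{cl}(s_{-M}\,\square\,s_M)$; writing out the infimal convolution with $s_{-M}(\cdot)=sM(-\,\cdot\,)$ and reindexing by $z$ produces $\inf\{sM(z)+sM(z-x):z\in X\}$. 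For the first equality I would argue on epigraphs: each of $sM(\pm\,\cdot\,)$ is sublinear, so its epigraph is a convex cone with apex $0$; the epigraph of the pointwise infimum $\inf\{sM(-x),sM(x)\}$ is the union of these two cones, and the convex closure of a union of convex cones sharing an apex is their Minkowski sum, which is exactly the epigraph of the infimal convolution. Thus $\co[\inf\{sM(-x),sM(x)\}]$ coincides with $\mathrm{cl}(s_{-M}\,\square\,s_M)$, closing the triangle of equalities. The delicate point is precisely that, in the general duality used here, the infimal convolution need be neither attained nor closed, so the convex closure $\co$ on the function side must be read as exactly the operation repairing this; I would verify that it matches the lower-semicontinuous hull produced by the conjugacy argument.

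Finally, for claim~(4) I would identify the constraint set explicitly. Applying Lemma~\ref{lm:polar} to $M^\circ$ gives $s(M^\circ)(y)=\mu M^{\circ\circ}(y)=\mu M(y)$, so by claim~(2) one has $s^sM^\circ(y)=\sup\{\mu M(-y),\mu M(y)\}$. Since $M$ is closed, convex and $0\in M$, the condition $s^sM^\circ(y)\leq1$ is therefore equivalent to $\mu M(y)\leq1$ and $\mu M(-y)\leq1$, that is to $y\in M$ and $-y\in M$, i.e.\ $y\in -M\cap M$. Substituting this set back into the definition of the support function yields $\sup\{\langle x,y\rangle:s^sM^\circ(y)\leq1\}=\sup\{\langle x,y\rangle:y\in -M\cap M\}=s^\circ M(x)$, as required; alternatively one could reach the same conclusion through the polarity identity~(\ref{eq:polar-and}), which expresses $(-M\cap M)^\circ$ as $\co[-M^\circ\cup M^\circ]$.
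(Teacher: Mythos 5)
Your proof is correct in substance for the setting in which the lemma is actually used, but it takes a genuinely different route from the paper's, which works almost entirely on the set side via polar calculus. For claim~(2) the paper computes $s^sM=\mu[(-M\cup M)^\circ]=\mu[-M^\circ\cap M^\circ]$ using Lemma~\ref{lm:polar} and equation~(\ref{eq:polar-or}), whereas you split the supremum over the union directly; your argument is more elementary and needs no polarity at all. For claim~(3) the paper uses equation~(\ref{eq:polar-and}) to get $s^\circ M=\mu\co[-M^\circ\cup M^\circ]$ and then \emph{cites} the equivalence of the convex closure of an infimum with the infimal convolution of sublinear functions; you instead work on the function side, writing $\delta_{-M\cap M}=\delta_{-M}+\delta_M$ and invoking $(f+g)^\ast=\mathrm{cl}(f^\ast\,\square\,g^\ast)$ for closed proper convex functions (applicable since $M$ is closed convex and $0\in -M\cap M$), and you re-derive rather than cite the sublinear-hull identity by the cone/Minkowski-sum argument on epigraphs. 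This buys self-containedness, and your explicit flagging of the closure/attainment subtlety in the infimal convolution is at least as careful as the paper, which delegates that point silently to the reference. For claim~(4) the paper substitutes $N=(-M\cap M)^\circ$ into the identity $N^\circ=\{y:sN(y)\leq1\}$, while you identify the constraint set $\{y:s^sM^\circ(y)\leq1\}=-M\cap M$ concretely via $sM^\circ=\mu M$ and claim~(2); both are valid, and yours makes the geometry more transparent.

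One caveat deserves attention: your opening reduction, ``WLOG $M=M^{\circ\circ}$, which changes none of the support functions involved,'' is not justified as stated. While $sM$ and $s^sM$ do depend on $M$ only through its closed convex hull, $s^\circ M=s[-M\cap M]$ does not, because intersection does not commute with convex closure. For instance, take $M=\{(0,0),(2,0),(-1,1),(-1,-1)\}\subset\bR^2$: then $-M\cap M=\{0\}$, so $s^\circ M\equiv0$, yet $-M^{\circ\circ}\cap M^{\circ\circ}$ is a polygon containing $0$ in its interior, whose support function is strictly positive for every $x\neq0$. So claims~(3) and~(4) are false for non-convex $M$, and the lemma must be read under the standing hypothesis that $M$ is closed, convex and contains $0$ --- which is also what the paper's own proof tacitly assumes, since equation~(\ref{eq:polar-and}) fails for the same example. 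Under that hypothesis your reduction is vacuous and the rest of your argument stands; you should simply state the hypothesis rather than assert an invariance that does not hold.
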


\begin{proof}
\begin{enumerate}
\item Follows from set inclusions: $-M\cup M\supseteq M\supseteq -M\cap M$.
\item $s^sM(x)=\mu[-M^\circ\cap M^\circ](x)=\sup\{\mu M^\circ(-x),\mu M^\circ(x)\}$ by Lemma~\ref{lm:polar} and equation~(\ref{eq:polar-or}).
\item $s^\circ M(x)=\mu\co[-M^\circ\cup M^\circ](x)=\co[\inf\{\mu M^\circ(-x),\mu M^\circ(x)\}]$ by Lemma~\ref{lm:polar} and equation~(\ref{eq:polar-and}).  The second equation follows from the equivalence of convex closure infimum and infimal convolution for sublinear functions \cite{Tikhomirov90:_convex}.
\item Follows from $sM(x)=sM^{\circ\circ}(x)$, $0\in M$, and $N^\circ=\{y:sN(y)\leq 1\}$ by substituting $N=(-M\cap M)^\circ=\co[-M^\circ\cup M^\circ]$.\qed
\end{enumerate}
\end{proof}
%Note also that $s^\circ M(x)=\inf\{sM(z)+sM(z-x):z\in X\}$ (Lemma~3 in \cite{Garcia-Raffi_etal03}).

\section{Distance Functions and Sublevel Neighbourhoods}
\label{sec:distance}

% Let $F:Y\to\bR\cup\{\infty\}$ be a closed functional minimized at $y_0$.  Then the \emph{sublevel} set $M(\lambda)=\{y-y_0:F(y)-F(y_0)\leq\lambda\}$ is a closed neighbourhood of $0\in Y-y_0$.  A closed neighbourhood of $0\in X$ in the dual space is defined by the sublevel set $N(\lambda^\ast)=\{x:F^\ast(x)-F^\ast(0)-\langle x,y_0\rangle\leq\lambda^\ast\}$ of the Legendre-Fenchel transform $F^\ast(x):=\sup[\langle x,y\rangle-F(y)]$ of $F$.  More generally,

A closed neighbourhood of $z\in Y$ can be defined by sublevel set $\{y:D[y,z]\leq\lambda\}$ of a \emph{distance} function $D:Y\times Y\to\bR\cup\{\infty\}$ satisfying the following axioms:
\begin{enumerate}
\item $D[y,z]\geq0$.
\item $D[y,z]=0$ if $y=z$.
\end{enumerate}
Thus, a distance is generally not a metric (i.e. non-degeneracy, symmetry or the triangle inequality are not required).  A distance function associated with closed functional $F:Y\to\bR\cup\{\infty\}$ can be defined as follows:
\begin{equation}
D_F[y,z]:=\inf\{F(y)-F(z)-\langle x,y-z\rangle:x\in\partial F(z)\}
\label{eq:distance}
\end{equation}
The set $\partial F(z):=\{x:\langle x,y-z\rangle\leq F(y)-F(z),\,\forall y\in Y\}$ is called \emph{subdifferential} of $F$ at $z$.  It follows immediately from the definition of subdifferential that $D_F[y,z]\geq0$.  We shall define $D_F[y,z]:=\infty$, if $\partial F(z)=\varnothing$ or $F(y)=\infty$.  We note that the notion of subdifferential can be applied to a non-convex function $F$.  However, non-empty $\partial F(z)$ implies $F(z)<\infty$ and $F(z)=F^{\ast\ast}(z)$, $\partial F(z)=\partial F^{\ast\ast}(z)$ (\cite{Rockafellar74}, Theorem~12).  Generally, $F^{\ast\ast}\leq F$, so that $F(y)-F(z)\geq F^{\ast\ast}(y)-F^{\ast\ast}(z)$ if $\partial F(z)\neq\varnothing$.  If $F$ is G\^{a}teaux differentiable at $z$, then $\partial F(z)$ has a single element $x=\nabla F(z)$, called the {\em gradient} of $F$ at $z$.  Thus, definition~(\ref{eq:distance}) is a generalization of the Bregman divergence for the case of a non-convex and non-differentiable $F$.  Note that the dual functional $F^\ast$ defines dual distance $D_F^\ast$ on $X$, which is related to $D_F$ as follows: $D_F[y,z]=D_F^\ast[\nabla F(z),\nabla F(y)]$.

\begin{theorem}
$D_F[y,z]=0\ \iff\ \{y,z\}\subseteq\partial F^\ast(x)\,,\ \exists\,x\in X$.
\label{th:zero-distance}
% If $F$ is closed and convex, then the whole segment $[y,z]$ is in $\partial F^\ast(x)$; but if $F$ is not convex, then $\partial F^\ast(x)$ may not include some points in $[y,z]$.
\end{theorem}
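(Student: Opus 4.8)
The plan is to reduce the vanishing of $D_F$ to the existence of a subgradient that is common to $F$ at $y$ and at $z$, and then to transport this statement to the dual side through the subgradient inversion rule $x\in\partial F(w)\iff w\in\partial F^\ast(x)$.

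First I would rewrite the bracketed expression in definition~(\ref{eq:distance}) by means of the Fenchel--Young equality. For $x\in\partial F(z)$ one has $\langle x,z\rangle-F(z)=F^\ast(x)$, whence $F(y)-F(z)-\langle x,y-z\rangle=F(y)+F^\ast(x)-\langle x,y\rangle$. The right-hand side is precisely the Fenchel--Young gap at $(x,y)$, so it is nonnegative and vanishes iff $x\in\partial F(y)$, equivalently $y\in\partial F^\ast(x)$. This reproves $D_F[y,z]\ge0$ and, more usefully, gives $D_F[y,z]=\inf\{F(y)+F^\ast(x)-\langle x,y\rangle:x\in\partial F(z)\}$, a nonnegative infimum that equals zero exactly when some admissible $x$ closes the gap.

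For the implication $(\Leftarrow)$ I would assume $\{y,z\}\subseteq\partial F^\ast(x)$. Then $z\in\partial F^\ast(x)$ makes $x$ admissible, i.e. $x\in\partial F(z)$, while $y\in\partial F^\ast(x)$ makes the gap at this very $x$ equal to zero; since all terms are nonnegative, the infimum is attained and $D_F[y,z]=0$. At this point I would invoke the cited fact that a nonempty subdifferential forces $F=F^{\ast\ast}$ and $\partial F=\partial F^{\ast\ast}$ at the point, which is what legitimises reading $w\in\partial F^\ast(x)$ as $x\in\partial F(w)$ rather than merely $x\in\partial F^{\ast\ast}(w)$. For $(\Rightarrow)$, from $D_F[y,z]=0$ I would extract a minimiser $x^\ast\in\partial F(z)$ realising a zero gap; the gap characterisation then gives $x^\ast\in\partial F(y)$ as well, so $x^\ast\in\partial F(z)\cap\partial F(y)$, and inversion yields $\{y,z\}\subseteq\partial F^\ast(x^\ast)$.

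The hard part is the final step of $(\Rightarrow)$: the infimum is zero but need not be \emph{attained}, so the existence of a single $x^\ast$ witnessing the zero gap is not automatic in infinite dimensions. I would secure attainment under the regularity available here, e.g. weak$^\ast$-compactness of $\partial F(z)$ when $F$ is continuous at $z$, which makes the lower semicontinuous convex map $x\mapsto F^\ast(x)-\langle x,y\rangle$ attain its minimum over $\partial F(z)$. A helpful auxiliary observation is that $D_F[y,z]=0$ already forces $\inf_{x\in\partial F(z)}[F^\ast(x)-\langle x,y\rangle]=\inf_{x\in X}[F^\ast(x)-\langle x,y\rangle]=-F^{\ast\ast}(y)$ together with $F(y)=F^{\ast\ast}(y)$, so that the admissible minimisers coincide with the global ones and with $\partial F(y)$; attainment then reduces to the nonemptiness of $\partial F(y)\cap\partial F(z)$, which the same compactness guarantees.
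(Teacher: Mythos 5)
Your handling of the $(\Leftarrow)$ direction is correct and takes a genuinely different route from the paper. You work directly with the Fenchel--Young gap, rewriting $F(y)-F(z)-\langle x,y-z\rangle$ as $F(y)+F^\ast(x)-\langle x,y\rangle$ for admissible $x$, so that a common subgradient kills the gap with no differentiability assumptions. The paper instead passes to the dual-distance identity $D_F[y,z]=D_{F^\ast}[\nabla F(z),\nabla F(y)]$ and computes $D_F^\ast[x,x]=0$; this tacitly assumes the subdifferentials are singletons, and indeed the paper asserts the equivalence $\{y,z\}\subseteq\partial F^\ast(x)\iff\partial F(y)=\partial F(z)=\{x\}$, whose ``only if'' half is unjustified (a common subgradient does not make either subdifferential a singleton). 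So on this direction your argument is the more careful one. A caveat that affects you and the paper equally: upgrading $z\in\partial F^\ast(x)$ to $x\in\partial F(z)$ needs $F(z)=F^{\ast\ast}(z)$, which is not available for the genuinely non-convex $F$ the paper claims to allow; in that case even $(\Leftarrow)$ can fail, since $D_F[y,z]=\infty$ by convention when $\partial F(z)=\varnothing$ while $\partial F^\ast(x)$ may still contain both points. The theorem really lives in the closed convex setting.

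The attainment problem you flag in $(\Rightarrow)$ is a genuine gap, but it is a gap in the paper rather than a defect of your attempt: the paper's displayed chain only derives $D_F=0$ from the common-subgradient condition and never argues the converse at all. Moreover, the converse is false as stated, so your extra compactness hypothesis is necessary, not merely convenient. Concretely, in $\bR^2$ let $V=\{x:x_1\geq0,\ x_2\geq e^{-x_1}\}$, $g(x)=\mathrm{dist}(x,V)$ (closed, convex, Lipschitz), and $F=g^\ast$, so that $F$ is closed convex with $F^\ast=g$. Take $z=(0,0)$ and $y=(0,-1)$. One checks that $F(y)=F(z)=0$, that $\partial F(z)=\{x:g(x)=\inf g\}=V$, and hence $D_F[y,z]=\inf\{\,x_2:x\in V\,\}=0$, an infimum approached as $x_1\to\infty$ but never attained. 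Yet no $x$ satisfies $\{y,z\}\subseteq\partial F^\ast(x)=\partial g(x)$: the condition $z\in\partial g(x)$ forces $x\in V$, hence $x_2>0$, while $y\in\partial g(x)$ would force $x_2\leq\inf_{w\in V}w_2=0$. Here $z$ lies on the boundary of $\dom F$ and $\partial F(z)$ is unbounded --- precisely the situation your weak$^\ast$-compactness assumption (e.g.\ $F$ continuous at $z$) rules out. Your auxiliary reduction, that $D_F[y,z]=0$ forces $F(y)=F^{\ast\ast}(y)$ and turns attainment into nonemptiness of $\partial F(y)\cap\partial F(z)$, is also correct and is the right way to state what must be added to make the theorem true.
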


\begin{proof}
If $y=z$, then $D_F[y,z]=0$ by definition.  If $y\neq z$, then $\{y,z\}\subseteq\partial F^\ast(x)\,\iff\,\partial F(y)=\partial F(z)=\{x\}$, which follows from the property of subdifferentials: $y\in\partial F^\ast(x)\,\iff\,\partial F(y)\ni x$ (\cite{Rockafellar74}, Corollary to Theorem~12).  Thus, $D_F[y,z]=D_{F^\ast}[\nabla F(z),\nabla F(y)]=D_F^\ast[x,x]=0$.\qed
\end{proof}

\begin{corollary}
$D_F$ separates points of $\dom F\subseteq Y$ if and only if $F$ is G\^ateaux differentiable or $F^\ast$ is strictly convex.
\end{corollary}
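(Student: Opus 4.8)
The plan is to deduce the corollary directly from Theorem~\ref{th:zero-distance}, which already expresses the vanishing of $D_F$ through shared subgradients. For $y,z\in\dom F$ that theorem gives $D_F[y,z]=0$ exactly when $\{y,z\}\subseteq\partial F^\ast(x)$ for some $x$, equivalently when $x\in\partial F(y)\cap\partial F(z)$ for some $x\in X$. Consequently $D_F$ fails to separate points of $\dom F$ precisely when some $\partial F^\ast(x)$ contains two distinct elements $y\neq z$. So the first, purely formal, step is to record the equivalence: \emph{$D_F$ separates points of $\dom F$ if and only if the subdifferential correspondence is single-valued}, i.e. distinct points of $\dom F$ never share a subgradient.

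The second step turns single-valuedness into the two analytic conditions of the statement by means of the Legendre--Fenchel duality between differentiability and strict convexity \cite{Asplund-Rockafellar69,Moreau67}. Single-valuedness of a subdifferential on the interior of its domain is exactly G\^ateaux differentiability there, and, under conjugacy $x\in\partial F(y)\iff y\in\partial F^\ast(x)$ with $F=F^{\ast\ast}$ on $\dom\partial F$, the (essential) smoothness of one of the pair $F,F^\ast$ is equivalent to the (essential) strict convexity of the other. Here the dual relation $D_F[y,z]=D_{F^\ast}[\nabla F(z),\nabla F(y)]$ noted after~(\ref{eq:distance}) is the organising identity: it presents $D_F$ as the pullback of $D_{F^\ast}$ along the gradient map, so that separation of $D_F$ is controlled jointly by differentiability of $F$ and by strict convexity of its conjugate $F^\ast$. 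Assuming either of these --- equivalently, by the duality, both --- the gradient $\nabla F$ is well defined at each point while $D_{F^\ast}$ is strictly positive off its diagonal, and the pullback identity then forces $y=z$ whenever $D_F[y,z]=0$; this gives sufficiency. For necessity I would contrapose: if neither condition holds, the correspondence $\partial F^\ast$ takes a non-singleton value at some $x$, yielding distinct $y\neq z$ with a common subgradient and hence $D_F[y,z]=0$, so $D_F$ does not separate points. Combining the two directions with the duality-equivalence of the conditions gives the stated ``$F$ is G\^ateaux differentiable or $F^\ast$ is strictly convex''.

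I expect the main obstacle to lie in the second step: making the differentiability--strict-convexity duality rigorous in the present asymmetric and possibly non-reflexive duality $\langle X,Y\rangle$, where the finite-dimensional statement (Rockafellar, \cite{Rockafellar74}) must be replaced by its functional-analytic form and one must track \emph{essential} smoothness and \emph{essential} strict convexity carefully --- differentiability on the interior of the effective domain, with boundary behaviour of the kind appearing in Examples~\ref{ex:spb} and~\ref{ex:error} treated separately. The ``or'' in the statement is precisely the device that absorbs the gap between the global and the essential (interior-of-domain) versions of each property, which cannot in general be closed in infinite dimensions. A secondary technical point is attainment of the infimum in~(\ref{eq:distance}): one should confirm that $D_F[y,z]=0$ produces an actual $x$ realizing the common subgradient rather than merely an infimizing sequence, which is where closedness of $F$ and the identity $\partial F(z)=\partial F^{\ast\ast}(z)$ enter.
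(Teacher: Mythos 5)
Your Step~1 matches the paper's (implicit) derivation --- the corollary is stated as an immediate consequence of Theorem~\ref{th:zero-distance}, with no separate proof --- and your reduction ``$D_F$ fails to separate points of $\dom F$ iff some $\partial F^\ast(x)$ contains two distinct points of $\dom F$'' is exactly right. The genuine gap is in Step~2, where you apply the Legendre--Fenchel duality in the wrong direction. Single-valuedness of $\partial F^\ast$ is G\^ateaux differentiability of $F^\ast$, not of $F$; and $\{y,z\}\subseteq\partial F^\ast(x)$ with $y\neq z$ forces $F^{\ast\ast}=F$ to be affine on the segment $[y,z]$ with slope $x$, so it is \emph{strict convexity of $F$} (equivalently, essential smoothness of $F^\ast$) that rules this out. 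What your own Step~1 actually delivers is therefore: $D_F$ separates points iff (essentially) $F$ is strictly convex, equivalently $F^\ast$ is G\^ateaux differentiable --- the two properties sit on the opposite members of the pair $(F,F^\ast)$ from the statement you set out to prove. Your sufficiency argument cannot close this gap: from $D_F[y,z]=0$ the pullback identity $D_F[y,z]=D_{F^\ast}[\nabla F(z),\nabla F(y)]$ yields only $\nabla F(y)=\nabla F(z)$, and neither assumed disjunct gives injectivity of $\nabla F$, which is what concluding $y=z$ requires.

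Concretely, on $X=Y=\bR$ take $F(y)=\bigl(\max\{|y|-1,0\}\bigr)^2$: then $F$ is everywhere differentiable and $F^\ast(x)=|x|+x^2/4$ is strictly convex, so both disjuncts hold; yet $\nabla F(1/2)=0$ and $D_F[0,1/2]=0-0-0=0$ with $0\neq 1/2$ (indeed $\{0,1/2\}\subset\partial F^\ast(0)=[-1,1]$), so separation fails. Dually, $F(y)=|y|+y^2$ is strictly convex but not differentiable at $0$, and $F^\ast(x)=\bigl(\max\{|x|-1,0\}\bigr)^2/4$ is differentiable but not strictly convex, so neither disjunct holds; yet $D_F$ does separate points. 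Your contrapositive for necessity commits the same swap: non-differentiability of $F$ at $y$ produces $x_1\neq x_2$ in $\partial F(y)$, i.e.\ $y\in\partial F^\ast(x_1)\cap\partial F^\ast(x_2)$, which is non-separation of the \emph{dual} distance $D_{F^\ast}$ on $X$, not of $D_F$ on $Y$. In fact ``$F$ G\^ateaux differentiable or $F^\ast$ strictly convex'' is precisely the condition under which $D_F^\ast=D_{F^\ast}$ separates points of $\dom F^\ast\subseteq X$; the corollary as printed appears to carry this dual swap (harmless in the paper's application, since $KL$ and $KL^\ast$ are both strictly convex and differentiable on the relevant cones), and your argument, by bending the duality to fit the printed wording rather than following where your Step~1 leads, inherits the error instead of exposing it. Your closing caveat about attainment of the infimum in~(\ref{eq:distance}) is a legitimate secondary point --- the ``only if'' of Theorem~\ref{th:zero-distance} does presuppose an actual common subgradient rather than a minimizing sequence --- but it is not where the proof breaks.
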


Let us denote by $\nabla_1 D[y,z]$ and $\nabla_1^2 D[y,z]$ the first and the second G\^ateaux differentials of $D[y,z]$ with respect to the first argument.  For a twice G\^ateaux differentiable $F$ they are $\nabla_1 D_F[y,z]=\nabla F(y)-\nabla F(z)$ and $\nabla_1^2 D_F[y,z]=\nabla^2 F(y)$.

\begin{theorem}[Generalized Law of Cosines]
The following statements are equivalent:
\begin{eqnarray*}
D[y,z]&=&\int_0^1(1-t)\Bigl\langle\nabla_1^2 D[z+t(y-z),y](y-z),y-z\Bigr\rangle\,dt\\
D[y,w]&=&D[y,z]+D[z,w]-\langle\nabla_1 D[z,w],z-y\rangle
\end{eqnarray*}
\label{th:cosines}
\end{theorem}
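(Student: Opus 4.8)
The plan is to prove the equivalence by exhibiting a single ambient identity of which both displayed formulas are rearrangements. Fix $y$, $z$, $w$ and introduce the scalar function $g(t):=D[z+t(y-z),w]$ on $[0,1]$. First I would differentiate by the chain rule, obtaining
$$g'(t)=\langle\nabla_1 D[z+t(y-z),w],y-z\rangle\,,\qquad g''(t)=\langle\nabla_1^2 D[z+t(y-z),w](y-z),y-z\rangle\,,$$
and then apply Taylor's theorem with integral remainder to $g$ on $[0,1]$. Since $g(0)=D[z,w]$ and $g(1)=D[y,w]$, this produces the expansion
$$D[y,w]=D[z,w]+\langle\nabla_1 D[z,w],y-z\rangle+\int_0^1(1-t)\langle\nabla_1^2 D[z+t(y-z),w](y-z),y-z\rangle\,dt\,,$$
valid along the segment whenever $D[\cdot,w]$ is twice G\^ateaux differentiable there.

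Next I would rewrite the second displayed statement. Using $z-y=-(y-z)$, the law of cosines reads $D[y,w]-D[z,w]-\langle\nabla_1 D[z,w],y-z\rangle=D[y,z]$, whose left-hand side is exactly the integral remainder of the expansion above. Hence the law of cosines is equivalent to the assertion that this remainder equals $D[y,z]$, for every choice of $w$. To connect this with the first statement, I would invoke the independence of the second G\^ateaux differential from its second argument, recorded before the theorem as $\nabla_1^2 D_F[y,z]=\nabla^2 F(y)$: this lets me replace $w$ by $y$ inside the remainder integral without altering its value. The remainder then coincides verbatim with the right-hand side of the first displayed statement, so the condition that the remainder equals $D[y,z]$ is precisely the first statement, and the two statements are equivalent.

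The step I expect to be the main obstacle is the rigorous justification of Taylor's theorem with integral remainder in the G\^ateaux-differentiable, possibly infinite-dimensional, setting: one must verify that $g$ is genuinely $C^2$ on $[0,1]$, that differentiating under the bilinear pairing is legitimate, and that the remainder integral converges. These points are routine for a smooth $F$ on a finite-dimensional space but require care when $F$ is only G\^ateaux differentiable and $Y$ is infinite-dimensional. Once the expansion is secured, the algebraic rearrangement and the appeal to the Hessian's independence from its second slot are immediate.
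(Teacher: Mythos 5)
Your proof is correct and follows essentially the same route as the paper's: both expand $D[\cdot,w]$ to first order at $z$ via Taylor's theorem with integral remainder and reduce the equivalence of the two statements to the single identity that the remainder equals $D[y,z]$. The only difference is that you explicitly justify replacing $w$ by $y$ in the second slot of the Hessian using $\nabla_1^2 D_F[y,z]=\nabla^2 F(y)$, a step the paper performs silently by writing the remainder directly as $\int_0^1(1-t)\langle\nabla_1^2 D[z+t(y-z),y](y-z),y-z\rangle\,dt$.
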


\begin{proof}
Consider the first order Taylor expansion of $D[\cdot,w]$ at $z$:
\[
D[y,w]=D[z,w]+\langle\nabla_1 D[z,w],y-z\rangle+R_1[z,y]
\]
where the remainder is $R_1[z,y]=\int_0^1(1-t)\langle\nabla_1^2 D[z+t(y-z),y](y-z),y-z\rangle\,dt$.  The result follows from the equality $D[y,z]=R_1[z,y]$.\qed
\end{proof}

An asymmetric seminorm on space $X$ can be defined either by the gauge or support function of sublevel sets of distances $D_F^\ast[x,0]$ and $D_F[y,z]$ respectively:
\[
\|x|_{F^\ast}:=\inf\{\alpha>0:D_F^\ast[x/\alpha,0]\leq1\}\,,\quad
\|x|_F:=\sup_y\{\langle x,y-z\rangle:D_F[y,z]\leq1\}
\]
The supremum is achieved at $y(\beta)\in\partial F^\ast(\beta x)$, $D_F[y(\beta),z]=1$.  A quasi-pseudometric is defined as $\rho_{F^\ast}(w,x)=\|x-w|_{F^\ast}$ or $\rho_F(w,x)=\|x-w|_F$.  The dual space $Y$ is equipped with asymmetric seminorms and quasi-pseudometrics in the same manner.  The following characterization of the topology is known.
\begin{theorem}[\cite{Garcia-Raffi_etal03} or see Proposition~1.1.40 in \cite{Cobzas13}]
An asymmetric seminormed space $X$ is:
\begin{list}{}{}
\item [$T_0$] if and only if $\|x|_{F^\ast}>0$ or $\|-x|_{F^\ast}>0$ for all $x\neq0$;
\item [$T_1$] if and only if $\|x|_{F^\ast}>0$ for all $x\neq 0$;
\item [$T_2$ (Hausdorff)] if and only if $\|x\|_{F^\ast}^\circ>0$ for all $x\neq0$.
\end{list}
\end{theorem}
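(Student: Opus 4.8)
The plan is to read all three separation axioms off the base of the topology generated by $\rho_{F^\ast}$, namely the forward balls $B(w,r)=\{x:\|x-w|_{F^\ast}<r\}$, and to translate each into an inequality on the asymmetric seminorm.

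The single enabling observation is that these balls, centred at a fixed point, form a neighbourhood base there. Indeed, if $x\in B(c,r)$, then $r':=r-\|x-c|_{F^\ast}>0$, and subadditivity $\|w-c|_{F^\ast}\le\|w-x|_{F^\ast}+\|x-c|_{F^\ast}$ gives $B(x,r')\subseteq B(c,r)$. Hence ``$x$ has a neighbourhood excluding $y$'' is equivalent to $\|y-x|_{F^\ast}>0$. The $T_0$ and $T_1$ cases are then immediate: distinct $x,y$ are topologically distinguishable iff a neighbourhood of one excludes the other, i.e. $\|y-x|_{F^\ast}>0$ or $\|x-y|_{F^\ast}>0$, which over all $u=y-x\neq0$ is the stated $T_0$ disjunction; requiring the separating neighbourhood in both directions for every pair is exactly $\|u|_{F^\ast}>0$ for all $u\neq0$, since $u$ ranges over all nonzero differences.

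The Hausdorff case carries the content. Here I would invoke the infimal-convolution form of the even symmetrisation in Lemma~\ref{lm:even}(3), $s^\circ M(x)=\inf\{sM(z)+sM(z-x):z\in X\}$, which with $\|x|_{F^\ast}=sM(x)$ becomes $\|x\|_{F^\ast}^\circ=\inf_z\{\|z|_{F^\ast}+\|z-x|_{F^\ast}\}$. Substituting $z=w-a$ into $\|b-a\|_{F^\ast}^\circ$ yields the key identity $\|b-a\|_{F^\ast}^\circ=\inf_w\{\|w-a|_{F^\ast}+\|w-b|_{F^\ast}\}$, so the symmetrised seminorm of a difference is precisely the infimum over $w$ of the two forward distances to $w$. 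Both implications now fall out of ball estimates. If $\|b-a\|_{F^\ast}^\circ=2\epsilon>0$, then $B(a,\epsilon)\cap B(b,\epsilon)=\varnothing$, for a common point $w$ would give $\|w-a|_{F^\ast}+\|w-b|_{F^\ast}<2\epsilon$, contradicting that the infimum equals $2\epsilon$. Conversely, shrinking any pair of disjoint open sets separating $a,b$ to disjoint basic balls $B(a,r)\cap B(b,s)=\varnothing$ forces $\|w-a|_{F^\ast}\ge r$ or $\|w-b|_{F^\ast}\ge s$ for every $w$, whence $\|b-a\|_{F^\ast}^\circ\ge\min\{r,s\}>0$.

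I expect the main obstacle to lie not in the ball arithmetic but in justifying the symmetrisation identity: one must ensure the convex-closure/infimal-convolution equivalence of Lemma~\ref{lm:even}(3) genuinely applies to the convex sublevel sets defining $\|\cdot|_{F^\ast}$, so that $\|\cdot\|_{F^\ast}^\circ$ really is the two-sided distance infimum rather than merely dominating or being dominated by it. As a consistency check, Lemma~\ref{lm:even}(1) gives $\|x\|_{F^\ast}^\circ\le\|x|_{F^\ast}$, so the Hausdorff condition implies the $T_1$ condition, which in turn implies the $T_0$ disjunction, matching the expected hierarchy $T_2\Rightarrow T_1\Rightarrow T_0$.
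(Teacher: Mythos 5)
Your proposal is correct, but there is no internal proof to compare it against: the paper imports this theorem from the sources cited in its header (Garcia-Raffi et al., or Proposition~1.1.40 of Cobzas) and supplies no argument of its own, only the geometric gloss that follows the statement. Judged on its own merits, your argument is the standard one and it is sound: the subadditivity-based observation that forward balls centred at a point form a neighbourhood base there is exactly what legitimizes reading $T_0$ and $T_1$ off the seminorm, and your identity $\|b-a\|_{F^\ast}^\circ=\inf_w\{\|w-a|_{F^\ast}+\|w-b|_{F^\ast}\}$ is the correct interpretation of the otherwise undefined symbol $\|\cdot\|_{F^\ast}^\circ$, namely the $s^\circ$-symmetrization of Lemma~\ref{lm:even}, and both directions of your ball arithmetic for $T_2$ check out. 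The one obstacle you flag---whether Lemma~\ref{lm:even}(3) genuinely applies---is real but can be discharged in a line, so it is not a gap: the unit sublevel set $N=\{x:D_F^\ast[x,0]\leq1\}$ is closed, convex and contains $0$ (it is a sublevel set of the closed convex function $D_F^\ast[\,\cdot\,,0]$), hence $N=N^{\circ\circ}$ and the gauge $\|\cdot|_{F^\ast}=\mu N$ coincides with the support function $sN^\circ$ by Lemma~\ref{lm:polar}; Lemma~\ref{lm:even}(3) then applies verbatim, since it only needs a support function of a set containing zero. As a bonus, your formulation substantiates the paper's remark after the theorem: by Lemma~\ref{lm:guage}, positivity of $\|\cdot\|_{F^\ast}^\circ$ (the gauge of $\co[-N\cup N]$) for all $x\neq0$ is precisely boundedness of $\co[-N\cup N]$, equivalently absorbency of its polar $-N^\circ\cap N^\circ$, which is the ``polar set contains zero in the interior'' condition stated there; and your closing inequality $\|x\|_{F^\ast}^\circ\leq\|x|_{F^\ast}$ from Lemma~\ref{lm:even}(1) confirms the expected implication chain $T_2\Rightarrow T_1\Rightarrow T_0$.
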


These separation properties depend on sublevel set $\{x:D_F^\ast[x,0]\leq1\}$.  For $T_0$ it must not contain any hyperplane; for $T_1$ it must not contain any ray (i.e. it must be bounded); for $T_2$ its polar set must contain zero in the interior (i.e. its polar must be absorbing).  The following theorem is useful in our analysis.

\begin{theorem}[\cite{Asplund-Rockafellar69,Moreau67}]
If $0\in\Int{\dom F^\ast}\subset X$, then sublevel sets $\{y:F(y)\leq\lambda\}$ are bounded.  Conversely, if one of the sublevel sets for $\lambda>\inf F$ is bounded, then $0\in\Int{\dom F^\ast}$.
\label{th:zero-interior-bounded}
\end{theorem}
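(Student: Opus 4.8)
The plan is to route both implications through the support function $\sigma_\lambda(x):=sS_\lambda(x)=\sup\{\langle x,y\rangle:F(y)\le\lambda\}$ of the sublevel set $S_\lambda:=\{y:F(y)\le\lambda\}$. By the definition of a bounded set (cf.\ Lemmas~\ref{lm:guage} and~\ref{lm:polar}), $S_\lambda$ is bounded exactly when $\sigma_\lambda(x)<\infty$ for every $x\in X$. On the other side, $0\in\Int{\dom F^\ast}$ makes $\dom F^\ast$ a neighbourhood of $0$, hence absorbing; I will establish the reverse link --- that an absorbing $\dom F^\ast$ forces $0$ into its interior --- separately, since that is where the topology enters. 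Thus the analytic core is to pass between finiteness of $\sigma_\lambda$ and an absorbing domain of $F^\ast$.

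For the forward direction I would use the Fenchel--Young inequality $\langle x,y\rangle\le F(y)+F^\ast(x)$. If $F(y)\le\lambda$ this gives $\langle x,y\rangle\le\lambda+F^\ast(x)$, and taking the supremum over $y\in S_\lambda$ yields $\sigma_\lambda(x)\le\lambda+F^\ast(x)$. Hence $\sigma_\lambda$ is finite on the neighbourhood $U\subseteq\dom F^\ast$ supplied by $0\in\Int{\dom F^\ast}$. Since $U$ is absorbing and $\sigma_\lambda$ is positively homogeneous of the first degree, each $x\in X$ equals $\beta(x/\beta)$ with $x/\beta\in U$ for some $\beta>0$, so $\sigma_\lambda(x)=\beta\,\sigma_\lambda(x/\beta)<\infty$. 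Therefore $S_\lambda$ is bounded for every $\lambda$.

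The converse is the substantial half. Assume $S_{\lambda_0}$ is bounded with $\lambda_0>\inf F$, and pick $y_1$ with $m_1:=F(y_1)<\lambda_0$. For $\mu\ge\lambda_0$ and any $y\in S_\mu$ I would slide along the segment from $y_1$ to $y$: with $t=(\lambda_0-m_1)/(\mu-m_1)$ the point $y_t=(1-t)y_1+ty$ satisfies $F(y_t)\le(1-t)m_1+t\mu=\lambda_0$ by convexity, so $\langle x,y_t\rangle\le\sigma_{\lambda_0}(x)$. Expanding and taking the supremum over $y\in S_\mu$ gives, after a short computation, $\sigma_\mu(x)\le\frac{\mu-m_1}{\lambda_0-m_1}\,\sigma_{\lambda_0}(x)-\frac{\mu-\lambda_0}{\lambda_0-m_1}\,\langle x,y_1\rangle$. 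One checks the elementary identity $F^\ast(x)=\sup_{\mu\ge\inf F}\{\sigma_\mu(x)-\mu\}$ (for fixed $y$ the optimal level is $\mu=F(y)$), and substituting the bound shows the coefficient of $\mu$ is nonpositive as soon as $\sigma_{\lambda_0}(x)-\langle x,y_1\rangle\le\lambda_0-m_1$, so the supremum over $\mu$ is finite and $F^\ast(x)<\infty$. The set $C:=\{x:\sigma_{\lambda_0}(x)-\langle x,y_1\rangle\le\lambda_0-m_1\}$ is convex (a sublevel set of a sublinear-plus-linear function), contains $0$, and is absorbing since $\sigma_{\lambda_0}(\varepsilon x)-\langle\varepsilon x,y_1\rangle=\varepsilon(\sigma_{\lambda_0}(x)-\langle x,y_1\rangle)$ drops below $\lambda_0-m_1$ for small $\varepsilon>0$. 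As $C\subseteq\dom F^\ast$, the domain of $F^\ast$ is absorbing, i.e.\ $0$ is a core point.

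The main obstacle I anticipate is precisely this last step: passing from ``$0$ is a core (absorbing) point of $\dom F^\ast$'' to ``$0\in\Int{\dom F^\ast}$''. In a topological vector space the two coincide for convex sets, but the paper explicitly allows the polar topologies to fail the TVS axioms, so the identification rests on $F^\ast$ being closed and convex and on working in the topology compatible with $\langle\cdot,\cdot\rangle$ --- exactly the point carried by the arguments of Asplund--Rockafellar and Moreau. A secondary item to verify is that $\inf F>-\infty$ (equivalently $0\in\dom F^\ast$), so that $\sigma_\mu(x)-\mu$ is also controlled for $\mu\in[\inf F,\lambda_0]$ through the inclusion $S_\mu\subseteq S_{\lambda_0}$.
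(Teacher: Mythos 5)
The paper itself contains no proof of Theorem~\ref{th:zero-interior-bounded}: it is quoted from \cite{Asplund-Rockafellar69,Moreau67} and used as a black box, so there is nothing internal to compare against and your argument is necessarily an independent route. On its merits it is essentially correct, and it is in fact the classical Moreau/Asplund--Rockafellar argument. The forward direction via Fenchel--Young plus positive homogeneity is fine, and your identification of ``$S_\lambda$ bounded'' with ``$\sigma_\lambda(x)<\infty$ for all $x$'' is exactly the paper's definition of boundedness (containment in scaled halfspaces). The converse computation also checks out: the segment bound $\sigma_\mu(x)\le\frac{\mu-m_1}{\lambda_0-m_1}\sigma_{\lambda_0}(x)-\frac{\mu-\lambda_0}{\lambda_0-m_1}\langle x,y_1\rangle$ and the identity $F^\ast(x)=\sup_{\mu\ge\inf F}\{\sigma_\mu(x)-\mu\}$ are both valid.

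The one point to push back on: the obstacle you flag at the end is already removed by your own construction, so you should not defer it to the cited sources. Since $\sigma_{\lambda_0}(x)-\langle x,y_1\rangle=s[S_{\lambda_0}-y_1](x)$, your set $C$ is precisely $(\lambda_0-m_1)\,(S_{\lambda_0}-y_1)^\circ$, a scaled polar of a bounded set (boundedness of $S_{\lambda_0}$ passes to the translate $S_{\lambda_0}-y_1$). In the framework of Section~\ref{sec:support-functions}, scaled polars of bounded sets are exactly the basic zero-neighbourhoods of the polar (strong) topology on $X$ --- this is the dual form of the paper's observation that $M$ is absorbing iff $M^\circ$ is bounded. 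So once $\Int{\cdot}$ is read in that polar topology, which is the only reading under which the converse can hold in a bare dual pair and the one intended by the cited works, your $C$ is itself a neighbourhood of $0$ contained in $\dom F^\ast$; no core-versus-interior lemma is needed, and the worry about $X$ failing to be a topological vector space is moot. Two caveats you use silently and should state: (i) convexity, closedness and properness of $F$ are indispensable for the converse --- on $\bR$ the nonconvex closed function $F(y)=|y|$ for $|y|\le1$, $F(y)=2$ otherwise, has the bounded sublevel set $S_1=[-1,1]$ with $1>\inf F=0$, yet $\dom F^\ast=\{0\}$; (ii) $\inf F>-\infty$ follows from boundedness of $S_{\lambda_0}$ together with an affine minorant of the closed proper convex $F$, which is how your ``secondary item'' gets verified.
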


\section{Asymmetric Topology Generated by the KL-Divergence}
\label{sec:KL-topology}

The KL-divergence can be defined as Bregman divergence associated with closed convex functional $KL(y)=\langle\ln y-1,y\rangle$:
\[
D_{KL}[y,z]=\langle\ln y-\ln z,y\rangle-\langle1,y-z\rangle
\]
Note that $KL$ is a proper closed convex functional that is finite for all $y\geq0$, if we define $(\ln 0)\cdot 0=0$ and $KL(y)=\infty$ for $y\ngeq0$. The dual of $KL$ is the moment generating functional $KL^\ast(x)=\langle e^x,z\rangle$.  The dual divergence of $x$ from $0$ is:
\[
D_{KL}^\ast[x,0]=\langle e^x-1-x,z\rangle
\]
The above divergence can be written as $D_{KL}^\ast[x,0]=\langle\phi^\ast(x),z\rangle$, where $\phi^\ast(x)=e^x-1-x$.  The dual of $\phi^\ast$ is the closed convex function $\phi(u)=(1+u)\ln(1+u)-u$.  Making the change of variables $y\mapsto u=\frac{y}{z}-1$, the KL-divergence can be written in terms of $\phi(u)$:
\[
D_{KL}[y,z]=D_{KL}[(1+u)z,z]=\langle(1+u)\ln(1+u)-u,z\rangle
\]
Sublevel set $M=\{y-z:D_{KL}[y,z]\leq1\}$ is a closed neighbourhood of $0\in Y-z$; sublevel set $N=\{x:D_{KL}^\ast[x,0]\leq1\}$ is a closed neighbourhoods of $0\in X$.  Both functions $\phi(u)$ and $\phi^\ast(x)$ are not even, and these neighbourhoods are not balanced.  In the theory of Orlicz spaces the symmetrized functions $\phi(|u|)$ and $\phi^\ast(|x|)$ are used to define even functionals and norms \cite{Krasnoselskii-Rutitskii58}.  This approach has been used in infinite-dimensional information geometry \cite{Pistone-Sempi95}.  In particular, because $\phi(|u|)$ belongs to the $\Delta_2$ class \cite{Krasnoselskii-Rutitskii58}, the corresponding Orlicz space $Y_{\phi(|\cdot|)}$ (and the statistical manifold it contains) is separable.  The dual Orlicz space $X_{\phi^\ast(|\cdot|)}$ is not separable, because $\phi^\ast(|x|)$ is not $\Delta_2$.  Note, however, that another symmetrization is possible: $\phi(-|u|)$, which is not $\Delta_2$, and $\phi^\ast(-|x|)$, which is $\Delta_2$.  Thus, one can introduce the non-separable Orlicz space $Y_{\phi(-|\cdot|)}$, and the dual separable Orlicz space $X_{\phi^\ast(-|\cdot|)}$.  One can check that the following inequalities hold: $\phi(|u|)\leq\phi(u)\leq\phi(-|u|)$ (resp. $\phi^\ast(|x|)\geq\phi^\ast(x)\geq\phi^\ast(-|x|)$), which corresponds to the following symmetrizations and inclusions of sublevel sets: $\co[-M\cup M]\supseteq M\supseteq -M\cap M$ (resp. $-N\cap N\subseteq N\subseteq\co[-N\cup N]$).  Thus, the asymmetric topology of space $Y_{\phi}$, induced by $D_{KL}$ (resp. of $X_{\phi^\ast}$, induced by $D_{KL}^\ast$) is finer than topology of the separable Orlicz space $Y_{\phi(|\cdot|)}$ (resp. $X_{\phi^\ast(-|\cdot|)}$), and so it is Hausdorff.  On the other hand, the diameter $\mathrm{diam}(M)=\sup\{\rho_{KL}(y,z):y,z\in M\}$ of set $M\subset Y$ (resp. for $\rho_{KL}^\ast(x,w)$ and $N\subset X$) is the diameter with respect to the metric of the Orlicz space $Y_{\phi(-|\cdot|)}$ (resp. $X_{\phi^\ast(|\cdot|)}$), which is complete.  Therefore, every nested sequence of sets with diameters decreasing to zero has non-empty intersection, so that space $Y_\phi$ (resp. $X_{\phi^\ast}$) is $\rho$-sequentially complete (\cite{Reilly1982}, Theorem~10).  Thus, we have proven the following theorem, concluding this short paper.

\begin{theorem}
Asymmetric seminorm $\|x|_{KL}^\ast:=\sup\{\langle x,y-z\rangle:D_{KL}[y,z]\leq1\}$ (resp. $\|y-z|_{KL}:=\inf\{\alpha^{-1}>0:D_{KL}[z+\alpha(y-z),z]\leq1\}$) induces Hausdorff topology on space $X$ (resp. on $Y=Y_+-Y_+$), and therefore it is an asymmetric norm.  It is $\rho$-sequentially complete and contains a separable subspace, which is an Orlicz space with the norm $\|x\|_{\phi^\ast(-|\cdot|)}$ (resp. $\|y-z\|_{\phi(|\cdot|)}$).
\label{th:kl-topology}
\end{theorem}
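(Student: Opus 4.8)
The plan is to reduce the entire statement to one-dimensional convexity facts about the pair of mutually conjugate convex functions $\phi(u)=(1+u)\ln(1+u)-u$ and $\phi^\ast(x)=e^x-1-x$, combined with the polarity calculus of Section~\ref{sec:support-functions}. First I would confirm that $KL(y)=\langle\ln y-1,y\rangle$ is proper, closed and convex with Fenchel conjugate $KL^\ast(x)=\langle e^x,z\rangle$, so that $D_{KL}$ and $D_{KL}^\ast$ are genuine Bregman divergences in the sense of~(\ref{eq:distance}). Writing the dual divergence as $D_{KL}^\ast[x,0]=\langle\phi^\ast(x),z\rangle$ and, after the substitution $u=y/z-1$, the primal divergence as $D_{KL}[y,z]=\langle\phi(u),z\rangle$, identifies the two sublevel neighbourhoods $N=\{x:D_{KL}^\ast[x,0]\leq1\}$ and $M=\{y-z:D_{KL}[y,z]\leq1\}$ as the unit balls of the asymmetric Orlicz modulars generated by $\phi^\ast$ and $\phi$. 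The asymmetric seminorms of the theorem are then precisely the support and gauge functionals of these sets, so Lemma~\ref{lm:polar} and Lemma~\ref{lm:even} become available.

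The second step is to sandwich each asymmetric ball between two symmetric Orlicz balls. I would verify the elementary pointwise inequalities $\phi(|u|)\leq\phi(u)\leq\phi(-|u|)$ and, dually, $\phi^\ast(|x|)\geq\phi^\ast(x)\geq\phi^\ast(-|x|)$, by comparing each convex function with its two even reflections separately on $\{u\geq0\}$ and $\{u<0\}$. Integrating against the base measure $z$ and passing to unit sublevel sets turns these into the inclusions $-M\cap M\subseteq M\subseteq\co[-M\cup M]$ and $-N\cap N\subseteq N\subseteq\co[-N\cup N]$; by Lemma~\ref{lm:even}, $\co[-M\cup M]$ is the symmetric ball of $\phi(|\cdot|)$ and $-M\cap M$ that of $\phi(-|\cdot|)$ (dually for $N$). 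Thus the asymmetric neighbourhood is squeezed between a separable and a complete symmetric Orlicz ball.

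Separation, normability and separability then follow formally. Because $M\subseteq\co[-M\cup M]$, the asymmetric topology on $Y_\phi$ is finer than the norm topology of the genuine normed (hence Hausdorff) Orlicz space $Y_{\phi(|\cdot|)}$, and a topology finer than a Hausdorff one is Hausdorff; the same comparison on the dual side makes $X_{\phi^\ast}$ finer than the Hausdorff space $X_{\phi^\ast(-|\cdot|)}$. In the language of the $T_0/T_1/T_2$ characterization recalled before Theorem~\ref{th:zero-interior-bounded}, this is exactly the strict positivity of the symmetrized seminorm of Lemma~\ref{lm:even} off the origin, i.e. the $T_2$ condition, and a Hausdorff asymmetric seminorm is an asymmetric norm. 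For the separable subspace I would invoke that $\phi(|\cdot|)$ (resp.\ $\phi^\ast(-|\cdot|)$) satisfies the $\Delta_2$-condition \cite{Krasnoselskii-Rutitskii58}, so that $Y_{\phi(|\cdot|)}$ (resp.\ $X_{\phi^\ast(-|\cdot|)}$) is separable and embeds into $Y_\phi$ (resp.\ $X_{\phi^\ast}$) through the inclusions established above.

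The delicate point, which I expect to be the main obstacle, is completeness, since in the asymmetric setting Cauchyness and completeness fragment into several inequivalent notions and the classical metric arguments need not transfer. My route is to exploit the remaining side of the sandwich together with the fact that the two-sided diameter $\mathrm{diam}(M)=\sup\{\rho_{KL}(y,z):y,z\in M\}$ measured by the quasi-metric $\rho_{KL}$ coincides with the diameter measured in the \emph{complete} symmetric Orlicz space $Y_{\phi(-|\cdot|)}$ (resp.\ $X_{\phi^\ast(|\cdot|)}$). Granting this, any nested sequence of $\rho$-closed sets whose $\rho$-diameters decrease to zero is simultaneously nested and vanishing inside a single coset of the complete metric space, hence has non-empty (singleton) intersection; feeding this into the diameter criterion for $\rho$-sequential completeness (\cite{Reilly1982}, Theorem~10) gives the claim. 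The care required here is twofold: to check that the two-sided diameter is governed by exactly the symmetrization dominated by $\phi(-|\cdot|)$, so that one obtains the genuine identity $\mathrm{diam}_{\rho_{KL}}=\mathrm{diam}_{Y_{\phi(-|\cdot|)}}$ rather than a mere inequality; and to reconcile $\rho$-closedness of the nested sets with the metric completeness of $Y_{\phi(-|\cdot|)}$, so that Cantor's nested-set theorem genuinely delivers the intersection point inside $Y_\phi$.
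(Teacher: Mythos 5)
Your proposal follows essentially the same route as the paper's own argument: rewrite $D_{KL}$ and $D_{KL}^\ast$ via the conjugate pair $\phi$, $\phi^\ast$, sandwich the asymmetric sublevel neighbourhoods between the symmetric Orlicz balls of $\phi(|\cdot|)$, $\phi(-|\cdot|)$ (resp. $\phi^\ast(|\cdot|)$, $\phi^\ast(-|\cdot|)$) using the pointwise inequalities, deduce Hausdorffness from the finer-than-a-norm-topology comparison and separability from the $\Delta_2$ condition, and obtain $\rho$-sequential completeness from the diameter identity with the complete Orlicz space together with Reilly's Theorem~10. The points you flag as delicate (the diameter identity and the nested-set argument) are exactly the steps the paper also relies on, so the two proofs coincide in substance.
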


\bibliographystyle{splncs}

\bibliography{rvb,nn,other,newbib,ica,vpb}

\end{document}